\documentclass[journal,10pt,doublecolumn]{IEEEtran}
\usepackage{amsmath}
\usepackage{graphicx}
\usepackage{cite}
\usepackage{amsfonts}
\usepackage{amsthm}
\usepackage{algorithm}
\usepackage{algorithmic}
\usepackage{amssymb}
\usepackage{color}
\ifCLASSOPTIONcompsoc
\usepackage[caption=false,font=normalsize,labelfont=sf,textfont=sf]{subfig}
\else
\usepackage[caption=false,font=footnotesize]{subfig}
\fi


\theoremstyle{definition}  \newtheorem{lemma}{Lemma}
\theoremstyle{definition}  
\theoremstyle{definition}

\hyphenation{op-tical net-works semi-conduc-tor}

\begin{document}

\title{Robust Trajectory and Transmit Power Design for Secure UAV Communications}

\author{Miao~Cui, Guangchi~Zhang,~\IEEEmembership{Member,~IEEE,} Qingqing~Wu,~\IEEEmembership{Member,~IEEE,}~and~Derrick~Wing~Kwan~Ng,~\IEEEmembership{Senior~Member,~IEEE}
	\thanks{M. Cui and G. Zhang are with the School of Information Engineering, Guangdong University of Technology, Guangzhou, China (email: \{cuimiao, gczhang\}@gdut.edu.cn). Q. Wu is with the Department of Electrical and Computer Engineering, National University of Singapore, Singapore (email: elewuqq@nus.edu.sg). D. W. K. Ng is with the School of Electrical Engineering and Telecommunications, University of New South Wales, Australia (email: w.k.ng@unsw.edu.au). G. Zhang is the corresponding author. } }

\maketitle

\begin{abstract}
Unmanned aerial vehicles (UAVs) are anticipated to be widely deployed in future wireless communications, due to their advantages of high mobility and easy deployment. However, the broadcast nature of air-to-ground line-of-sight wireless channels brings a new challenge to the information security of UAV-ground communication. This paper tackles such a challenge in the physical layer by exploiting the mobility of UAV via its trajectory design. We consider a UAV-ground communication system with multiple potential eavesdroppers on the ground, where the information on the locations of the eavesdroppers is imperfect. We formulate an optimization problem which maximizes the average worst-case secrecy rate of the system by jointly designing the robust trajectory and transmit power of the UAV over a given flight duration. The non-convexity of the optimization problem and the imperfect location information of the eavesdroppers make the problem difficult to be solved optimally. We propose an iterative suboptimal algorithm to solve this problem efficiently by applying the block coordinate descent method, $\mathcal{S}$-procedure, and successive convex optimization method. Simulation results show that the proposed algorithm can improve the average worst-case secrecy rate significantly, as compared to two other benchmark algorithms without robust design.
\end{abstract}



\vspace{-0.4cm}

\section{Introduction} \vspace{-0.1cm}
Due to the advantages of high mobility and flexibility, unmanned aerial vehicles (UAVs) have found interesting applications in wireless communications \cite{zeng2016wireless,qingqing18UAVmagazine,Mozaffari2015,Yang2018,Bor2016}. Line-of-sight (LoS) channels usually exist between UAVs and ground nodes in UAV wireless communication systems \cite{Lin2017}.
  This has also inspired a proliferation of studies recently on the new research paradigm of jointly optimizing the UAV trajectory design and communication resource allocation, for e.g.  multiple access channel (MAC) and broadcast channel (BC) \cite{JR:wu2017_capacity,JR:wu2017_ofdm}, interference channel (IFC)\cite{WuGC2017}, and  wiretap channel \cite{Zhang2017GC,Lian2018wcl}.  In particular, as shown in  \cite{ JR:wu2017_capacity} and  \cite{JR:wu2017_ofdm}, significant communication throughput gains can be achieved by mobile UAVs over static UAVs/fixed terrestrial BSs by exploiting the new design degree of freedom via UAV trajectory optimization, especially for delay-tolerant applications.  In \cite{WuGC2017},  a joint UAV trajectory, user association, and power control optimization framework is proposed for cooperative  multi-UAV enabled wireless networks. However, legitimate UAV-ground communications are more prone to be intercepted by potential eavesdroppers on the ground, as compared to terrestrial wireless communication systems, which gives rise to a new security challenge. Although security can be conventionally handled by using cryptographic methods adopted in the higher communication protocol layers, physical layer security is now emerging as a promising alternative technology to realize secrecy in wireless communication \cite{Gopala2008}. One widely adopted performance metric in the physical layer security design is the so-called secrecy rate \cite{QiangLi2015}, at which confidential information can be reliably conveyed. For secure UAV communications, a joint UAV trajectory and transmit power control design framework has been proposed in \cite{Zhang2017GC}, where the average secrecy rate is maximized by proactively enhancing the legitimate link and degrading the eavesdropping link via UAV trajectory design in addition to power adaptation. However, the location of the eavesdropper is assumed to be perfectly known in \cite{Zhang2017GC}, which is overly optimistic. {In practice, although the UAV can estimate the location of a potential eavesdropper by applying an camera or synthetic aperture radar \cite{Li2015SAR},} the eavesdropper may remain silent to hide its existence and thus the location estimation is expected suffering from errors. As a result, existing security-enabled techniques based on the assumption of perfect location information of eavesdroppers may result in significant degradation on security performance. Moreover, there may be more than one eavesdroppers trying to intercept the legitimate UAV-ground communication in practice. In this scenario, the UAV transmitter needs to steer away from multiple eavesdroppers and at the same time approach its intended receiver as close as possible to enhance secrecy rate. Hence, designing the UAV trajectory in such a scenario is an interesting but challenging problem, which has not been addressed in \cite{Zhang2017GC}.

\begin{figure}[!t]
	\centering
	\includegraphics[width=0.8\columnwidth]{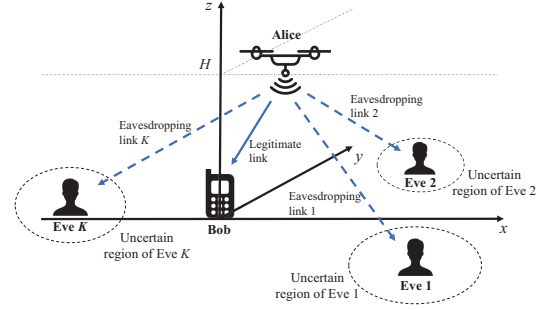}
	\caption{A UAV (Alice) communicates with a ground node (Bob) with $K$ potential eavesdroppers (Eves) on the ground.}  \label{Figphysical}
\end{figure}

In this paper, we consider secure legitimate UAV-ground communications via robust joint UAV trajectory and transmit power design in a practical scenario, where there are multiple eavesdroppers on the ground, as shown in Fig. \ref{Figphysical}. The UAV only knows the approximate regions in which the eavesdroppers are located while the exact locations of the eavesdroppers are unknown. { We aim to maximize the average worst-case secrecy rate over a given flight duration of the UAV, subject to its mobility constraints as well as its average and peak transmit power constraints. The main contributions are summarized as follows. \vspace{-0.1cm}
\begin{itemize}
	\item The considered problem is intractable and obtaining the globally optimal solution is difficult due to its non-convexity and semi-infinite numbers of constraints. To tackle the intractability, we propose an efficient suboptimal algorithm to solve this problem, based on the block coordinate descent method, $\mathcal{S}$-Procedure, and the successive convex optimization method.
	\item Since the proposed algorithm takes into account and provides robustness against the imperfect location information of multiple eavesdroppers, it is more suitable for practical applications, as compared to the existing work on secure UAV communications \cite{Zhang2017GC,Lian2018wcl}.
	\item Simulation results show that the proposed algorithm can improve the average worst-case secrecy rate significantly, compared to other benchmark schemes assuming perfect location information of the eavesdroppers or ignoring the eavesdroppers.
\end{itemize}
}

\vspace{-0.4cm}

\section{System Model And Problem Formulation} \vspace{-0.1cm}
{We consider a UAV-ground communication system, where $K$ eavesdroppers (Eves) on the ground try to intercept the legitimate communication from a UAV (Alice) to a ground node (Bob), as shown in Fig. \ref{Figphysical}.} We express locations in the three-dimensional Cartesian coordinate system. Without loss of generality, we assume that Bob locates at $(0,0,0)$, which is perfectly known by Alice. For $k \in \mathcal{K} \triangleq \{1,\ldots,K\}$, the exact location of Eve $k$, denoted by $(x_k,y_k,0)$ in meters (m), is not known, but its estimated location, denoted by $(x_{\text{E}_k}, y_{\text{E}_k}, 0)$ in m, is assumed to be known. The relation between the actual and the estimated $x$-$y$ coordinates of Eve $k$ is given by \vspace{-0.2cm}
\begin{equation}  \label{EquLocErr}
x_k = x_{\text{E}_k} + \Delta x_k, \; y_k = y_{\text{E}_k} + \Delta y_k, \vspace{-0.2cm}
\end{equation}
respectively, where $\Delta x_k$ and $\Delta y_k$ denote estimation errors on $x_k$ and $y_k$, respectively, and satisfy the following condition  \vspace{-0.15cm}
\begin{equation}  \label{EquErrArea}
(\Delta x_k, \Delta y_k) \in \mathcal{E}_k \triangleq \{(\Delta x_k, \Delta y_k) | \Delta x_k^2 + \Delta y_k^2 \leq Q_k^2 \},  \vspace{-0.1cm}
\end{equation}
where $\mathcal{E}_k$ denotes a continuous set of possible errors. Thus, Eve $k$ can be regarded as locating in an uncertain circular region with center $(x_{\text{E}_k}, y_{\text{E}_k}, 0)$ and radius $Q_k$.

It is assumed that Alice flies at a constant altitude $H$ in m, which is specified for safety considerations such as building avoidance \cite{WuGC2017}. Thus, Alice's coordinate over time is denoted as $(x(t),y(t),H)$, $0 \leq t \leq T$, where $T$ in seconds (s) is its flight duration. {To facilitate trajectory design for Alice, we quantize the flight duration $T$ into $N$ sufficiently small time slots with equal length $d_t$. Since $d_t$ is small enough, Alice can be regarded as static within each slot. Thus, Alice's trajectory over the duration $T$ can be represented by a sequence $\{( x[n],y[n],H )\}_{n=1}^{N}$. We let $(x[0],y[0],H)$ and $(x[N+1],y[N+1],H)$ denote Alice's initial and final locations, respectively, and then write the mobility constraints of Alice as \vspace{-0.15cm}
\begin{equation}  \label{EquMobilityCon}
( x[n+1]-x[n] )^2 + ( y[n+1]-y[n] )^2 \leq (v_{\max} d_t)^2, \; \forall n, \vspace{-0.15cm}
\end{equation}
where $v_{\max}$ denotes the maximum speed of Alice.}

{The channel from Alice to Bob is assumed to be LoS channel \cite{WuGC2017,wu2018JSAC_ICC,wu2017joint_globecom}\footnote{Measurement results in \cite{Lin2017} show that the LoS channel model is a good approximation for UAV-ground communications in practice even if the UAV flies at a moderate altitude, e.g., $85$m.}. Thus, the power gain of the channel from Alice to Bob in slot $n$ is given by } \vspace{-0.2cm}
\begin{equation}
g_{\text{AB}}[n] = \beta_0 d_{\text{AB}}^{-2}[n] = \frac{ \beta_0 }{ x^2[n] + y^2[n] + H^2 },
\end{equation}
{where $\beta_0$ denotes the power gain of a channel with reference distance $d_0=1$m \cite{WuGC2017}, and $d_{\text{AB}}[n] = \sqrt{ x^2[n] + y^2[n] + H^2 }$ denotes the distance between Alice and Bob in slot $n$. Similarly, the channel from Alice to Eve $k$ can be assumed to be LoS channel, whose power gain in slot $n$ is given by } \vspace{-0.2cm}
\begin{equation}
g_{\text{AE}_k}[n] = \frac{ \beta_0 }{ (x[n]- x_k)^2 + (y[n] - y_k)^2 + H^2 }.  \vspace{-0.1cm}
\end{equation}

{ Let $P[n]$ denote the transmit power of Alice in slot $n$, and $\bar{P}$ and $P_{\text{peak}}$ denote the average power and peak power of Alice, respectively. Thus, we write the average and peak transmit power constraints of Alice as}  \vspace{-0.2cm}
\begin{subequations} \label{EquPowerCon}
\begin{align}
&\frac{1}{N} \sum_{n=1}^{N} P[n] \leq \bar{P},  \label{EquAvgPowCon} \\
&0 \leq P[n] \leq P_{\text{peak}}, \; \forall n.  \vspace{-0.1cm}
\end{align}
\end{subequations}
To ensure that \eqref{EquAvgPowCon} is a non-trivial constraint, we assume $\bar{P}<P_{\text{peak}}$. {Then, we can express the achievable rate from Alice to Bob in slot $n$ in bits/second/Hertz (bps/Hz) as}  \vspace{-0.2cm}
\begin{align}
R_{\text{AB}}[n] = & \log_2 \left(1 + \frac{ P[n] g_{\text{AB}}[n] }{ \sigma^2 }  \right)  \nonumber \\
= &   \log_2 \left( 1+ \frac{ \gamma_0 P[n] } {  x^2[n] + y^2[n] + H^2 }  \right), \label{EquRAB}  \vspace{-0.1cm}
\end{align}
{where $\gamma_0 =\beta_0 / \sigma^2$ and $\sigma^2$ is Gaussian noise power at the receiver. Similarly, we express the achievable rate from Alice to Eve $k$ in slot $n$ in bps/Hz as }  \vspace{-0.2cm}
\begin{equation}   \label{EquRAE}
R_{\text{AE}_{k}}[n]  = \log_2 \bigg( 1 + \frac{ \gamma_0 P[n] }{ (x[n]- x_k)^2 + (y[n] - y_k)^2 + H^2 } \bigg).
\end{equation}
With \eqref{EquRAB} and \eqref{EquRAE}, {the average worst-case secrecy rate from Alice to Bob over the flight duration $T$ in bps/Hz is \cite{QiangLi2015}}     \vspace{-0.2cm}
\begin{equation}   \label{EquSecrecyRate}
R_{\text{sec}} = \frac{1}{N} \sum_{n=1}^N \left[ R_{\text{AB}}[n] - \max_{ k\in \mathcal{K} } \max_{ (\Delta x_k, \Delta y_k) \in \mathcal{E}_k } R_{\text{AE}_k}[n] \right]^{+},
\end{equation}
where $[x]^+ \triangleq \max(x,0)$.

{ For secure the communication from Alice to Bob, we jointly design the trajectory and transmit power of Alice to maximize the average worst-case secrecy rate in \eqref{EquSecrecyRate} subject to its mobility and power constraints in \eqref{EquMobilityCon} and \eqref{EquPowerCon}. The optimization variables include Alice's trajectory and transmit power over $N$ time slots, which are denoted as $\mathbf{x} \triangleq \left[x[1], \ldots, x[N]\right]^{\dagger}$, $\mathbf{y} \triangleq \left[y[1], \ldots, y[N]\right]^{\dagger}$, and $\mathbf{P} \triangleq \left[P[1],\ldots,P[N] \right]^{\dagger}$, where $\dagger$ denotes the transpose operation. The problem is formulated as follows, where the constant term $1/N$ in \eqref{EquSecrecyRate} has been dropped, }  \vspace{-0.2cm}
\begin{align}
 \max_{ \mathbf{x}, \mathbf{y}, \mathbf{P} } & \;  \sum_{n=1}^N  \bigg[ R_{\text{AB}}[n]  - \max_{k\in \mathcal{K}} \max_{ (\Delta x_k, \Delta y_k) \in \mathcal{E}_k } R_{\text{AE}_k}[n] \bigg]^+  \label{EquOriginal} \\
\text{s.t.} \; & \; \eqref{EquMobilityCon}, \; \eqref{EquPowerCon}. \nonumber
\end{align}
{Problem \eqref{EquOriginal} is difficult to solve optimally because of the following reasons. First, the operator $[\cdot]^+$ introduces non-smoothness to the objective function. Second, the objective function is still not jointly concave with respect to $\mathbf{x}$, $\mathbf{y}$, and $\mathbf{P}$ even without $[\cdot]^+$. Third, the infinite number of possible $(\Delta x_k, \Delta y_k)$ makes \eqref{EquOriginal} an intractable semi-infinite optimization problem.} In the following section, we propose a computational efficient iterative suboptimal algorithm to solve problem \eqref{EquOriginal} approximately.

\vspace{-0.4cm}

\section{Proposed Algorithm for Problem \eqref{EquOriginal}} \vspace{-0.1cm}
We first tackle the non-smoothness of the objective function of problem \eqref{EquOriginal} by using the following lemma. \vspace{-0.1cm}
\begin{lemma}
Problem \eqref{EquOriginal} is equivalent\footnote{In this paper, the word ``equivalent'' means that both problems share the same optimal solution.} to the following problem:  \vspace{-0.2cm}
\begin{align}
 \max_{ \mathbf{x}, \mathbf{y}, \mathbf{P} } & \;  \sum_{n=1}^N  \bigg[ R_{\text{AB}}[n]  - \max_{k\in \mathcal{K}} \max_{ (\Delta x_k, \Delta y_k) \in \mathcal{E}_k } R_{\text{AE}_k}[n] \bigg]  \label{EquReform} \\
\text{s.t.} \; & \; \eqref{EquMobilityCon}, \; \eqref{EquPowerCon}. \nonumber  \vspace{-0.1cm}
\end{align}
\end{lemma}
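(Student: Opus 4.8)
The plan is to show that the optimal value of problem \eqref{EquOriginal} equals that of problem \eqref{EquReform}, and that an optimal solution of one is optimal for the other. Since the two problems differ only in whether the per-slot term $D[n] \triangleq R_{\text{AB}}[n] - \max_{k\in\mathcal{K}}\max_{(\Delta x_k,\Delta y_k)\in\mathcal{E}_k} R_{\text{AE}_k}[n]$ is clipped by $[\cdot]^+$ or not, and the feasible sets are identical, everything reduces to a statement about the objective functions evaluated at a feasible point: I want to argue that at any optimal solution of \eqref{EquReform}, every $D[n]$ is automatically nonnegative, so that $[D[n]]^+ = D[n]$ and the two objectives coincide there.

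The key step is the following observation. Suppose $(\mathbf{x}^\star,\mathbf{y}^\star,\mathbf{P}^\star)$ is optimal for \eqref{EquReform} and, for contradiction, $D[m] < 0$ for some slot $m$. Then I would construct a new feasible point that only changes the transmit power in slot $m$: set $P[m] = 0$ and keep all other variables (the entire trajectory and all other powers) unchanged. This modification preserves the mobility constraints \eqref{EquMobilityCon} trivially (the trajectory is untouched) and preserves the power constraints \eqref{EquPowerCon}, since reducing $P[m]$ to $0$ keeps $0 \le P[m] \le P_{\text{peak}}$ and can only decrease the average-power sum. With $P[m]=0$ we have $R_{\text{AB}}[m] = 0$ and $R_{\text{AE}_k}[m] = 0$ for all $k$ and all error realizations, hence the new $D[m] = 0 > D[m]$ (the old value), while every other slot's term is unchanged. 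Therefore the objective of \eqref{EquReform} strictly increases, contradicting optimality. Hence $D[n] \ge 0$ for all $n$ at any optimizer of \eqref{EquReform}.

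Given that, the two-way equivalence follows routinely. For any feasible point, $\sum_n [D[n]]^+ \ge \sum_n D[n]$ always, so the optimal value of \eqref{EquOriginal} is at least that of \eqref{EquReform}. Conversely, at an optimizer of \eqref{EquOriginal}, applying the same zero-out-a-slot argument to any slot with $D[n] < 0$ replaces $[D[n]]^+ = 0$ by a term that is still $0$ (since zeroing the power makes $D[n]=0$) without decreasing any other term or violating feasibility; iterating, we obtain an optimizer of \eqref{EquOriginal} with all $D[n] \ge 0$, on which $\sum_n [D[n]]^+ = \sum_n D[n]$, so the value of \eqref{EquReform} is at least that of \eqref{EquOriginal}. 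Combining the two inequalities gives equality of optimal values, and the construction shows a common optimal solution exists, which is the stated notion of equivalence.

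The main obstacle is mostly one of careful bookkeeping rather than depth: I need to make sure that zeroing the power in a single slot is genuinely feasibility-preserving (it is, because the mobility constraints couple only consecutive positions, not the powers, and the average-power constraint is monotone in each $P[n]$), and that the argument is stated so it handles the possibility of several bad slots simultaneously — which is why I phrase it as ``zero out all slots with $D[n]<0$ at once'' or equivalently iterate. One should also note the degenerate subtlety that if $D[n]=0$ with $P[n]>0$ there is nothing to fix, and that the supremum over the compact error set $\mathcal{E}_k$ is attained so $D[n]$ is well defined; neither causes trouble.
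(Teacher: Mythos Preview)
Your proposal is correct and follows essentially the same route as the paper: both arguments use the trivial inequality $[x]^+\ge x$ for one direction and, for the other, the construction that zeroes out $P[n]$ in every slot with a negative per-slot term (which is feasibility-preserving and makes that term exactly zero). Your write-up adds an explicit contradiction step showing any optimizer of \eqref{EquReform} already has all $D[n]\ge 0$, which the paper leaves implicit, but the key idea is identical.
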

\begin{proof}
Denote $W_1^*$ and $W_2^*$ as the optimal values of problems \eqref{EquOriginal} and \eqref{EquReform}, respectively. First, since $[x]^+ \geq x, \forall x$, we have $W_1^* \geq W_2^*$. Next, denote $(\mathbf{x}^*, \mathbf{y}^*, \mathbf{P}^*)$ as the optimal solution to \eqref{EquOriginal}, where $\mathbf{P}^*=[ P^*[1],\ldots,P^*[N] ]^\dagger$. Let $f(P[n]) = R_{\text{AB}}[n]  - \max_{k\in \mathcal{K}} \max_{ (\Delta x_k, \Delta y_k) \in \mathcal{E}_k } R_{\text{AE}_k}[n]$. We construct a feasible solution $(\hat{\mathbf{x}},\hat{\mathbf{y}},\hat{\mathbf{P}})$ to \eqref{EquReform}, such that $\hat{\mathbf{x}} = \mathbf{x}^*$, $\hat{\mathbf{y}} = \mathbf{y}^*$, and the elements of $\hat{\mathbf{P}}$ are obtained as: if $f(P^*[n])\geq 0$, $\hat{P}[n] = P^*[n]$; otherwise $\hat{P}[n] = 0$. Denote the objective value of \eqref{EquReform} attained at $(\hat{\mathbf{x}},\hat{\mathbf{y}},\hat{\mathbf{P}})$ as $\hat{W}$. The newly constructed solution $(\hat{\mathbf{x}},\hat{\mathbf{y}},\hat{\mathbf{P}})$ ensures that $\hat{W}=W_1^*$. Since $(\hat{\mathbf{x}},\hat{\mathbf{y}},\hat{\mathbf{P}})$ is feasible to \eqref{EquReform}, it follows that $W_2^* \geq \hat{W}$ and thus $W_2^* \geq W_1^*$. Therefore, $W_1^* = W_2^*$, which completes the proof.
\end{proof}

Although problem \eqref{EquReform} is more tractable, it is still difficult to solve due to its non-convexity. Nevertheless, we observe that the optimization variables can be partitioned into two blocks, i.e., $(\mathbf{x}, \mathbf{y})$ and $\mathbf{P}$, which facilitates the algorithm design for solving problem \eqref{EquReform} via the block coordinate descent method \cite{WuGC2017,Boyd2004}. Specifically, we solve \eqref{EquReform} by solving the following two sub-problems iteratively: sub-problem 1 optimizes $\mathbf{P}$ under given $(\mathbf{x}, \mathbf{y})$; while sub-problem 2 optimizes $(\mathbf{x}, \mathbf{y})$ under given $\mathbf{P}$, as detailed in the next two subsections, respectively. In the end, we summarize the overall algorithm and show its convergence.

\vspace{-0.4cm}

\subsection{Solution to Sub-Problem 1} \vspace{-0.1cm}
For given $(\mathbf{x}, \mathbf{y})$, sub-problem 1 can be written as  \vspace{-0.2cm}
\begin{equation} \label{EquSubProb1}
\max_{\mathbf{P}}  \; \sum_{n=1}^N \big[ \log_2(1+\alpha_n P[n]) - \log_2(1+\beta_n)  \big] \; \; \text{s.t.} \; \eqref{EquPowerCon},  \vspace{-0.1cm}
\end{equation}
where  \vspace{-0.2cm}
\begin{equation}  \label{Equa}
\alpha_n = \frac{ \gamma_0 } {  x^2[n] + y^2[n] + H^2 }, \; \;  \beta_n = \frac{ \gamma_0  } { \min_{k \in \mathcal{K}} \theta_{k,n} }, \quad \; \;  \vspace{-0.1cm}
\end{equation}  \vspace{-0.2cm}
\begin{equation}  \label{Equck}
\theta_{k,n} =  \min_{ (\Delta x_k, \Delta y_k) \in \mathcal{E}_k } (x[n]-x_k)^2 + (y[n]-y_k)^2 + H^2.  \vspace{-0.1cm}
\end{equation}
By substituting \eqref{EquLocErr} and \eqref{EquErrArea} into \eqref{Equck}, $\theta_{k,n}$ can be obtained as  \vspace{-0.2cm}
\begin{equation}
\theta_{k,n} = \begin{cases}  H^2  &  d_k \leq Q_k , \\  (d_k-Q_k)^2 + H^2 & d_k > Q_k,  \end{cases} \vspace{-0.05cm}
\end{equation}
where $d_k = \sqrt{ (x[n]-x_{\text{E}_k})^2 + ( y[n]-y_{\text{E}_k} )^2 }$. The optimal solution to problem \eqref{EquSubProb1} can be obtained as \cite{Gopala2008}  \vspace{-0.2cm}
\begin{equation}  \label{EquPowSol}
P^*[n] = \begin{cases} \min \left( [ \hat{P}[n] ]^+ , P_{\text{peak}} \right) & \alpha_n > \beta_n,  \\ 0 & \alpha_n \leq \beta_n , \end{cases} \vspace{-0.1cm}
\end{equation}
where  \vspace{-0.2cm}
\begin{equation}  \label{EquOptP}
\hat{P}[n] =  \sqrt{ \left( \frac{1}{2\beta_n} - \frac{1}{2\alpha_n} \right)^2 + \frac{1}{\lambda \ln 2} \left( \frac{1}{\beta_n} - \frac{1}{\alpha_n}  \right) } - \frac{1}{2\beta_n} - \frac{1}{2\alpha_n}.
\end{equation}
In \eqref{EquOptP}, $\lambda \geq 0$ is a parameter to ensure that the constraint \eqref{EquAvgPowCon} is satisfied at the optimal solution, which can be determined by bisection search \cite{Boyd2004}.

\vspace{-0.4cm}

\subsection{Solution to Sub-Problem 2} \vspace{-0.1cm}
\begin{figure*}[!t]
\normalsize
\begin{align}
\max_{\mathbf{x},\mathbf{y}}   \sum_{n=1}^N \bigg[ \log_2 \left( 1+\frac{P_n}{ x^[n] + y^2[n] + H^2 } \right)   - \log_2 \bigg( 1 + \frac{P_n}{ \min\limits_{k \in \mathcal{K}} \min\limits_{ (\Delta x_k, \Delta y_k) \in \mathcal{E}_k  }  (x[n]-x_k)^2 + (y[n]-y_k)^2 + H^2 }  \bigg)   \bigg] \; \text{s.t.} \eqref{EquMobilityCon}. \label{EquProbSub2}  \end{align}
\hrulefill
\end{figure*}

By setting $P_n = \gamma_0 P[n]$, sub-problem 2 can be expressed as \eqref{EquProbSub2} shown at the top of next page, which cannot be solved optimally in polynomial time due to its non-convexity. By introducing slack variables $\mathbf{u}  \triangleq [u[1],\ldots,u[N]]^\dagger$ and $\mathbf{t}  \triangleq [t[1],\ldots,t[N]]^\dagger$, we first consider the following equivalent problem:  \vspace{-0.2cm}
\begin{subequations}  \label{EquReform2}
	\begin{align}
	\max_{ \mathbf{x}, \mathbf{y}, \mathbf{u},\mathbf{t} } & \;  \sum_{n=1}^N \bigg[  \log_2 \left( 1+ \frac{ P_n } { u[n]  }  \right) -  \log_2 \bigg( 1 + \frac{ P_n }{ t[n] } \bigg) \bigg] \label{EquReform2Obj}   \\
	\text{s.t.}  \; \; &   \min_{ (\Delta x_k, \Delta y_k) \in \mathcal{E}_k } (x[n]- x_k)^2 + (y[n] - y_k)^2 + H^2   \nonumber \\
	& \geq t[n], \; \forall n,k, \label{EquCont}   \\
	& x^2[n] + y^2[n] + H^2 - u[n] \leq 0, \; \forall n, \label{EquConu} \\
	&  t[n] \geq H^2, \; \forall n, \; \eqref{EquMobilityCon}.\nonumber
	\end{align}
\end{subequations}
{Problems \eqref{EquProbSub2} and \eqref{EquReform2} have the same optimal solution of $(\mathbf{x},\mathbf{y})$, since the constraints \eqref{EquCont} and \eqref{EquConu} are active at the optimal solution to problem \eqref{EquReform2}. This can be proved by contradiction: if constraints \eqref{EquCont} and \eqref{EquConu} are inactive, the objective value of \eqref{EquReform2} can be improved by increasing $t[n]$ (decreasing $u[n]$). Hence, we can focus on solving problem \eqref{EquReform2}.} However, problem \eqref{EquReform2} is still intractable, since there is an infinite number of $(\Delta x_k, \Delta y_k)$ in constraint \eqref{EquCont} due to the continuous nature of $\mathcal{E}_k$. Now, we convert \eqref{EquCont} into equivalent constraints as follows. First, we substitute \eqref{EquLocErr} and \eqref{EquErrArea} into \eqref{EquCont} and rewrite it as  \vspace{-0.2cm}
\begin{subequations}
	\begin{align}
	& \Delta x_k^2 + \Delta y_k^2 -Q_k^2 \leq 0 ,\; \forall k,  \label{EquInequ1}  \\
	& -(x[n]- x_{\text{E}_k} -\Delta x_k )^2 - (y[n] - y_{\text{E}_k} -\Delta y_k)^2 \nonumber \\
	&- H^2 + t[n] \leq 0 , \; \forall k. \label{EquInequ2} \vspace{-0.1cm}
	\end{align}
\end{subequations}
Next, according to $\mathcal{S}$-Procedure \cite{Boyd2004}, since there exists a point $(\Delta \hat{x}_k, \Delta \hat{y}_k)$ (e.g., $(\Delta \hat{x}_k, \Delta \hat{y}_k)=(0,0)$) such that $\Delta \hat{x}_k^2 + \Delta \hat{y}_k^2 -Q_k^2 < 0$, the implication \eqref{EquInequ1} $\Rightarrow$ \eqref{EquInequ2} holds if and only if there exists $\xi_k[n] \geq 0$ such that  \vspace{-0.15cm}
\begin{equation}   \label{EquSProcMat1}
\boldsymbol{\Phi}(x[n], y[n], t[n], \xi_k[n]) \succeq \mathbf{0}, \; \forall k,n, \vspace{-0.1cm}
\end{equation}
where  \vspace{-0.2cm}
\begin{displaymath}
\begin{split}
& \boldsymbol{\Phi}(x[n], y[n], t[n], \xi_k[n])  \\
\; \; \; \; \; \;  \; =& \begin{bmatrix}
\xi_k[n] + 1& 0 & x_{\text{E}_k}-x[n] \\ 0 & \xi_k[n]+1 & y_{\text{E}_k} - y[n]  \\ x_{\text{E}_k}-x[n]  & y_{\text{E}_k} - y[n] & -Q_k^2 \xi_k[n] + c_k[n]
\end{bmatrix}, \; \text{and} \vspace{-0.1cm}
\end{split}
\end{displaymath} \vspace{-0.2cm}
\begin{align}
c_k[n] =& \; x^2[n] - 2 x_{\text{E}_k} x[n] + x_{\text{E}_k}^2 + y^2[n] - 2 y_{\text{E}_k} y[n] + y_{\text{E}_k}^2  \; \; \; \; \; \nonumber \\
&   +H^2 - t[n].   \label{Equc}
\end{align}

By replacing \eqref{EquCont} with \eqref{EquSProcMat1} and introducing slack variables $\boldsymbol{\Xi} \triangleq [\boldsymbol{\xi}_1, \ldots, \boldsymbol{\xi}_K ]$, where $\boldsymbol{\xi}_k \triangleq [\xi_k[1],\ldots,\xi_k[N] ]^\dagger$, we rewrite problem \eqref{EquReform2} into an equivalent form:  \vspace{-0.2cm}
\begin{subequations}  \label{EquReform3}
	\begin{align}
	\max_{ \mathbf{x}, \mathbf{y}, \mathbf{u},\mathbf{t}, \boldsymbol{\Xi} } \; &   \sum_{n=1}^N \bigg[  \log_2 \left( 1+ \frac{ P_n } { u[n]  }  \right) -  \log_2 \bigg( 1 + \frac{ P_n }{ t[n] } \bigg) \bigg]  \label{EquReform3Obj} \\
	\text{s.t.}  \; \;\;  &   \boldsymbol{\Phi}(x[n], y[n], t[n], \xi_k[n]) \succeq \mathbf{0}, \; \forall k,n,   \label{EquReform3Con2}    \\
	& t[n] \geq H^2, \; \xi_k[n] \geq 0, \; \forall k,n,\label{EquReform3Con3}  \\
	& \eqref{EquMobilityCon}, \; \eqref{EquConu}. \nonumber
	\end{align}
\end{subequations}
The objective function in \eqref{EquReform3Obj} is non-concave, since $ \log_2 ( 1+ \frac{ P_n } { u[n]  } )$ is convex. Moreover, the constraint \eqref{EquReform3Con2} is non-convex, since the terms $x^2[n]$ and $y^2[n]$ contained in $c_k[n]$ (see \eqref{Equc}) are non-linear. Thus, problem \eqref{EquReform3} is difficult to be solved optimally due to its non-convexity. We propose an iterative algorithm to find an approximate solution to problem \eqref{EquReform3} as follows. First, the algorithm assumes a feasible point $\mathbf{x}_{\text{fea}} \triangleq \left[ x_{\text{fea}}[1], \ldots, x_{\text{fea}}[N] \right]^{\dagger}$, $\mathbf{y}_{\text{fea}} \triangleq \left[ y_{\text{fea}}[1], \ldots, y_{\text{fea}}[N] \right]^{\dagger}$ and $\mathbf{u}_{\text{fea}} \triangleq \left[ u_{\text{fea}}[1], \ldots, u_{\text{fea}}[N] \right]^{\dagger}$, which is feasible to \eqref{EquReform3}. Then, by using the first-order Taylor expansions of $\log_2 ( 1+ \frac{ P_n } { u[n]  } )$, $x^2[n]$ and $y^2[n]$ at $u_{\text{fea}}[n]$, $x_{\text{fea}}[n]$ and $y_{\text{fea}}[n]$, respectively,  \vspace{-0.2cm}
\begin{align}
\log_2 \left( 1+ \frac{ P_n } { u[n]  }  \right) \geq & \log_2 \left( 1+ \frac{ P_n } { u_{\text{fea}}[n]  }  \right)  \nonumber \\
& - \frac{ P_n ( u[n] - u_{\text{fea}}[n] ) }{ \ln 2 ( u_{\text{fea}}^2[n] + P_n u_{\text{fea}}[n] ) },  \label{EquTayloru}
\end{align}
\begin{equation}   \label{EquTaylorx}
x^2[n] \geq - x_{\text{fea}}^2[n] + 2 x_{\text{fea}}[n] x[n], \vspace{-0.1cm}
\end{equation}
\begin{equation}   \label{EquTaylory}
y^2[n] \geq - y_{\text{fea}}^2[n] + 2 y_{\text{fea}}[n] y[n], \vspace{-0.1cm}
\end{equation}
problem \eqref{EquReform3} is approximately transformed to \vspace{-0.2cm}
\begin{subequations}  \label{EquSubProb2Recast}
	\begin{align}
	\max_{ \mathbf{x}, \mathbf{y}, \mathbf{u},\mathbf{t}, \boldsymbol{\Xi} } & \;  \sum_{n=1}^N -\frac{ P_n ( u[n] - u_{\text{fea}}[n] ) }{  \ln2 (u_{\text{fea}}^2[n] + P_n u_{\text{fea}}[n] ) }  -  \log_2 \bigg( 1 + \frac{ P_n }{ t[n] } \bigg)  \\
	\text{s.t.}  \; \;\;  &  \tilde{ \boldsymbol{\Phi} } (x[n], y[n], t[n], \xi_k[n]) \succeq \mathbf{0} , \; \forall k,n,    \label{EquSubProb2RcCon2}   \\
	& \eqref{EquMobilityCon}, \; \eqref{EquConu}, \; \eqref{EquReform3Con3}. \nonumber
	\end{align}
\end{subequations}
where  \vspace{-0.2cm}
\begin{align}
& \tilde{ \boldsymbol{\Phi} } (x[n], y[n], t[n], \xi_k[n]) \nonumber \\
\quad \; \; \; =& \begin{bmatrix}
\xi_k[n] + 1& 0 & x_{\text{E}_k}-x[n] \\ 0 & \xi_k[n]+1 & y_{\text{E}_k}-y[n] \\ x_{\text{E}_k}-x[n] & y_{\text{E}_k}-y[n] & -Q_k^2 \xi_k[n] + \tilde{c}_k[n]
\end{bmatrix}, \; \text{and}  \nonumber
\end{align}
\begin{align}
 &\tilde{c}_k[n] = - x_{\text{fea}}^2[n] + 2 x_{\text{fea}}[n] x[n] - 2 x_{\text{E}_k} x[n] + x_{\text{E}_k}^2 - y_{\text{fea}}^2[n]  \; \;  \; \nonumber \\
& \quad  \quad \; + 2 y_{\text{fea}}[n] y[n] - 2 y_{\text{E}_k} y[n] + y_{\text{E}_k}^2 +H^2 - t[n].  \label{Equctilde}
\end{align}
Note that problem \eqref{EquSubProb2Recast} is a semidefinite programming problem, which can be optimally solved by the interior-point method \cite{Boyd2004}.

{ \emph{Remark 1:} Since \eqref{EquTaylorx} and \eqref{EquTaylory} are lower bounds for $x^2[n]$ and $y^2[n]$, respectively \cite{Boyd2004}, we have $c_k[n] \geq \tilde{c}_k[n]$ and thus \vspace{-0.2cm}
\begin{equation}  \label{EquLargerPSM}
\boldsymbol{\Phi}  (x[n], y[n], t[n], \xi_k[n]) \succeq \tilde{ \boldsymbol{\Phi} } (x[n], y[n], t[n], \xi_k[n]), \vspace{-0.1cm}
\end{equation}
which means that \eqref{EquSubProb2RcCon2} implies \eqref{EquReform3Con2}. Hence, the solution to problem \eqref{EquSubProb2Recast} must be a feasible solution to problem \eqref{EquReform3}.

\emph{Remark 2:} Since \eqref{EquTayloru} is a lower bound of $\log_2 ( 1+ \frac{ P_n } { u[n]  } )$ \cite{Boyd2004}, problem \eqref{EquSubProb2Recast} maximizes a lower bound of the objective function of \eqref{EquReform3}. This lower bound is equal to the objective value of \eqref{EquReform3} only at $(\mathbf{x}_{\text{fea}}, \mathbf{y}_{\text{fea}}, \mathbf{u}_{\text{fea}})$, so the objective value of problem \eqref{EquReform3} with the solution to problem \eqref{EquSubProb2Recast} is equal to or greater than that with the solution $(\mathbf{x}_{\text{fea}}, \mathbf{y}_{\text{fea}}, \mathbf{u}_{\text{fea}})$.}

\vspace{-0.5cm}

\subsection{Overall Algorithm}  \label{SecOverallAlg}  \vspace{-0.1cm}
{We summarize the detail of the overall algorithm in Algorithm 1, which solve problems \eqref{EquSubProb1} and \eqref{EquSubProb2Recast} alternatively and iteratively until it converges. Since as shown in the previous two subsections, the objective value of problem \eqref{EquReform} with the solutions obtained by solving problems \eqref{EquSubProb1} and \eqref{EquSubProb2Recast} is non-decreasing over iterations and the optimal value of problem \eqref{EquReform} must be finite, the solution obtained by Algorithm 1 can be guaranteed to converge to a suboptimal solution \cite{WuGC2017}. Algorithm 1 is suitable for UAV applications, since it has a complexity of $\mathcal{O} \left[ N_{\text{ite}} (4N+KN)^{3.5} \right]$ and can obtain the solution in polynomial time, where $N_{\text{ite}}$ is the iteration number \cite{Boyd2004}. }
\begin{algorithm}[!t]
	\caption{Proposed Algorithm for Problem \eqref{EquOriginal}.}
	\begin{algorithmic}[1] {
		\STATE Initialize $\mathbf{P}^{(0)}$, $\mathbf{x}^{(0)}$, $\mathbf{y}^{(0)}$, and $\mathbf{u}^{(0)}$. Set  $m=0$.		
		\REPEAT
		\STATE Set $m \gets m+1$.	
		\STATE Let $\mathbf{x}_{\text{fea}} = \mathbf{x}^{(m-1)}$, $\mathbf{y}_{\text{fea}} = \mathbf{y}^{(m-1)}$ and $\mathbf{u}_{\text{fea}} = \mathbf{u}^{(m-1)}$. Solve problem \eqref{EquSubProb2Recast} under given $\mathbf{P}^{(m-1)}$ to obtain $(\mathbf{x}^{(m)}, \mathbf{y}^{(m)})$.
		\STATE Solve problem \eqref{EquSubProb1} under given $(\mathbf{x}^{(m)}, \mathbf{y}^{(m)} )$ to obtain $\mathbf{P}^{(m)}$.		
		\UNTIL {The fractional increase of the objective value is smaller than a threshold $\epsilon>0$.} }
	\end{algorithmic}
\end{algorithm}

\begin{figure*}[!t]
\centering
\subfloat[Trajectories of Alice.]{\includegraphics[width=0.31\textwidth]{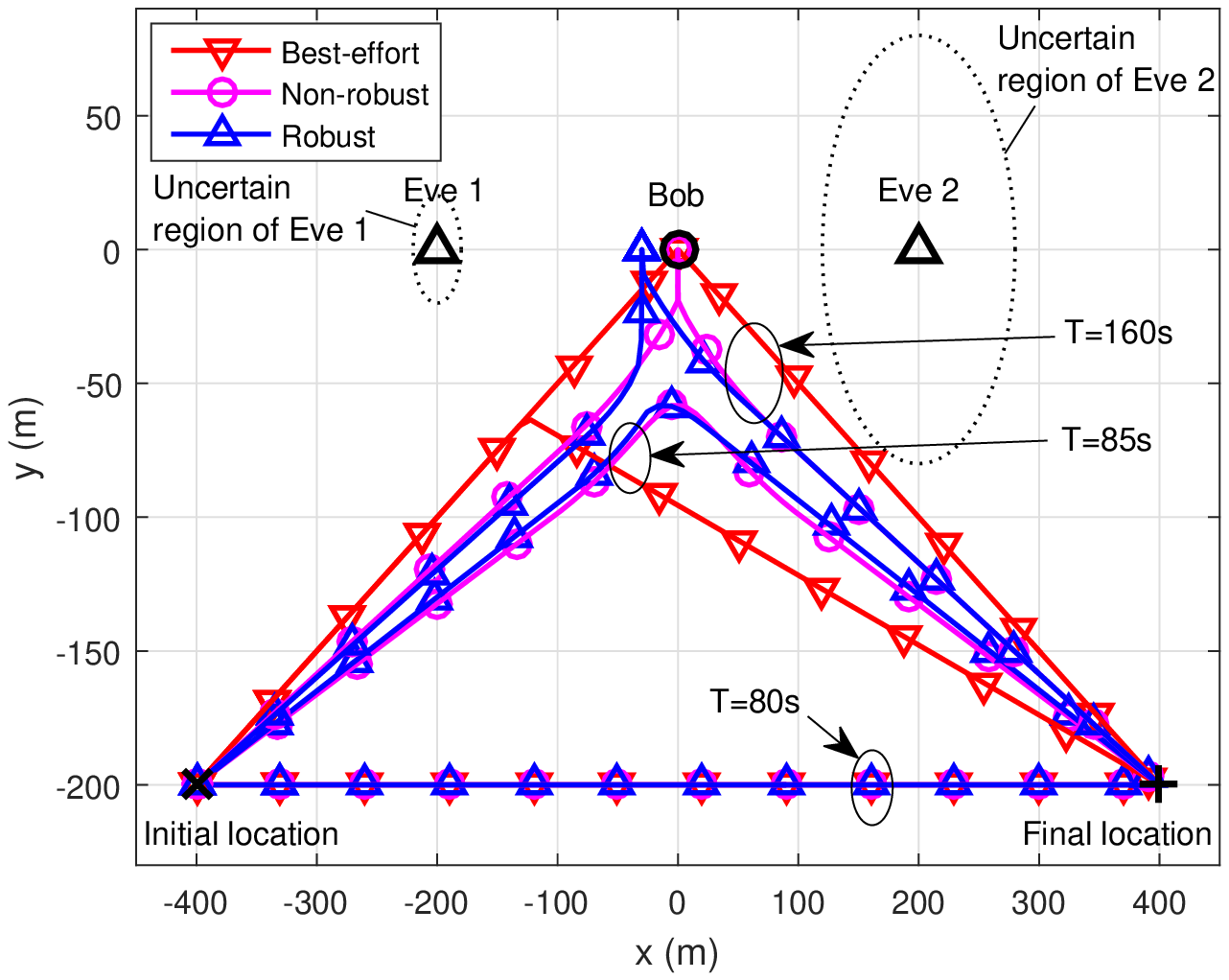}
\label{FigTraj}}
\hfil
\subfloat[Average secrecy rate versus $T$.]{\includegraphics[width=0.31\textwidth]{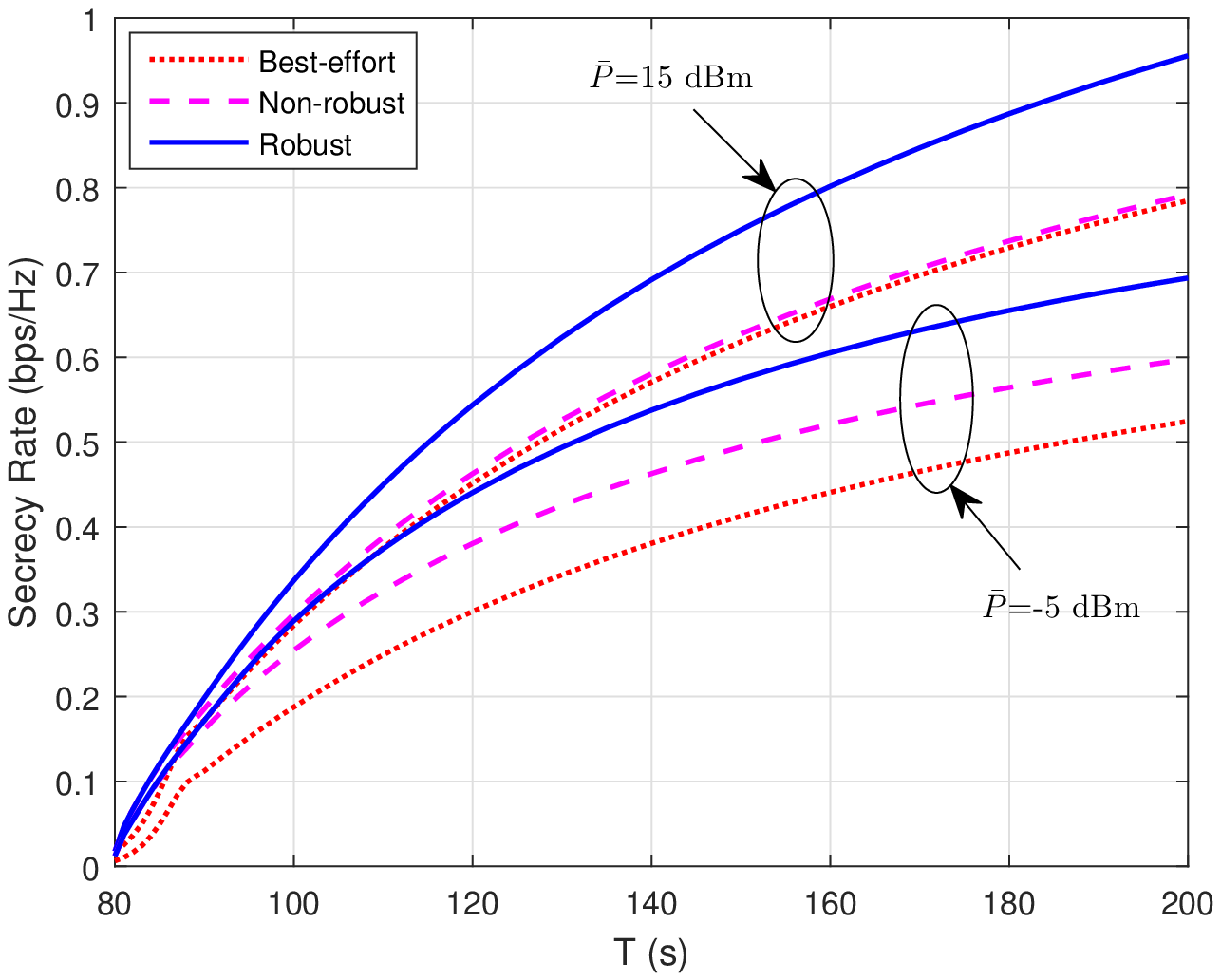}
\label{FigSRvsT}}
\hfil
\subfloat[Average secrecy rate versus $\bar{P}$.]{\includegraphics[width=0.31\textwidth]{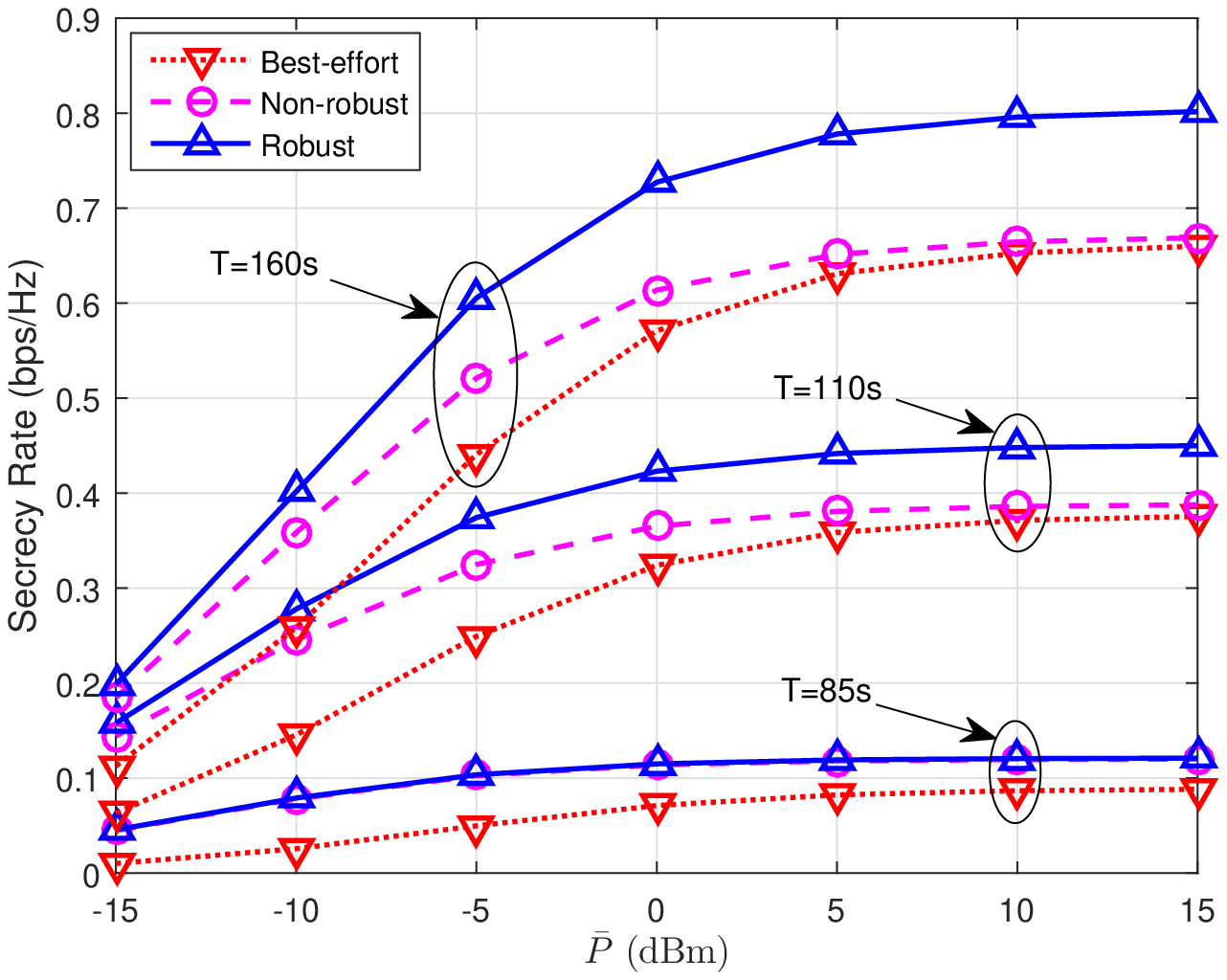} \label{FigSRvsP}}
\caption{Trajectories and average worst-case secrecy rates of different algorithms.}
\label{fig_sim}
\end{figure*}

\vspace{-0.2cm}
\section{Simulation Results}  \vspace{-0.05cm}
{This section provides simulation results to verify the performance of our proposed robust joint trajectory and transmit power design algorithm, as compared to the following two benchmark algorithms: 1) non-robust joint trajectory and transmit power design; 2) best-effort trajectory design with equal transmit power \cite{Zhang2017GC}}. Specifically, the non-robust algorithm only has the estimated locations of Eves and treats them as perfect information. Thus, it jointly designs the UAV trajectory and transmit power control by using Algorithm 1 assuming $Q_k=0$, $\forall k$. {The best-effort algorithm performs equal transmit power allocation over time and designs Alice's trajectory in the following heuristic best-effort manner: Alice flies to the location right above Bob at speed $v_{\max}$, then hovers there, and finally flies at speed $v_{\max}$ to reach the final location at time $T$. If Alice does not have sufficient time to reach the location above Bob, it will turn its direction midway and fly to the final location directly.} The initial feasible points for the proposed robust and benchmark non-robust algorithms are generated by the best-effort algorithm. There are $K=2$ Eves, whose estimated horizontal coordinates are $(x_{\text{E},1}, y_{\text{E},1}) = (-200,0)$m and $(x_{\text{E},2}, y_{\text{E},2}) = (200,0)$m, respectively, and $Q_1=20$m and $Q_2=80$m. The other parameters are set as follows: $H=100$m, $v_{\max}=10$m/s, $d_t=0.5$s, $\gamma_0=80$dB, $P_{\text{peak}} =4 \bar{P}$, $(x[0],y[0])=(-400,-200)$m, $(x[N],y[N])=(400,-200)$m, and $\epsilon=10^{-4}$.

Fig. 2(a) shows the trajectories of Alice by applying different algorithms when $\bar{P}=-5$dBm. It is observed that when $T$ is small (e.g., $T=80$s), the trajectories obtained by the robust and non-robust algorithms are very similar, since the flexibility in trajectory design is limited as Alice is required to fly from the initial location to the final location in a given duration $T$. As $T$ increases, the flexibility in designing efficient trajectory increases. This magnifies the differences between the robust and non-robust algorithms. When $T$ is sufficiently large (e.g., $T=160$s), by these two algorithms, Alice first flies at its maximum speed in an arc path to keep away from Eve 1 and reaches a certain point near Bob; then it hovers at that point as long as possible, and finally flies to the final location along an arc path bypassing Eve 2 at its maximum speed. However, in the proposed robust algorithm, the hovering point is on the left of Bob; while in the non-robust algorithm, the hovering point is directly above Bob. This is because although the distances from the estimated locations of Eves 1 and 2 to Bob are equal, the radius of the uncertain region of Eve 1 is smaller than that of Eve 2. {Considering the worst case, the proposed robust algorithm adjusts the hovering point closer to Eve 1 and farther from Eve 2 to strike a balance between enhancing the legitimate link from Alice to Bob and degrading the quality of the eavesdropping links from Alice to Eves 1 and 2, while the non-robust algorithm fails to strike such a balance by treating Eve 1 and Eve 2 equally.}

Figs. 2(b) and 2(c) show the corresponding average worst-case secrecy rates of all algorithms versus the flight duration $T$ and average power $\bar{P}$, respectively. In both figures, it can be observed that the secrecy rates of all algorithms increase with $T$ and $\bar{P}$. In particular, the proposed robust algorithm significantly outperforms the other two benchmark algorithms. In Fig. 2(c), it is observed that the secrecy rates of all algorithms are saturated when $\bar{P}$ is high. This is because as shown in \eqref{EquRAB}--\eqref{EquSecrecyRate}, the secrecy rate maximization problem \eqref{EquOriginal} is independent on the transmit power $P[n]$ and only depends on the UAV trajectory in the high transmit power regime. Furthermore, Fig. 2(c) shows that although the non-robust algorithm outperforms the best-effort algorithm, the secrecy rate gap between them becomes smaller as $\bar{P}$ increases. This is because the non-robust algorithm ignores the location estimation errors of Eves 1 and 2, and thus suffers from the performance loss. The above results demonstrate the importance and the potential performance gain brought by the robust joint trajectory and transmit power design.

\vspace{-0.1cm}

\section{Conclusion} \vspace{-0.05cm}
This paper investigated a secure UAV communication system when the locations of the eavesdroppers are not perfectly known as in the practical scenario. A robust joint trajectory and transmit power design algorithm was proposed to maximize the average worst-case secrecy rate subject to the UAV's mobility constraints as well as its average and peak transmit power constraints. Simulation results showed that the proposed joint design algorithm which considers the location uncertainties of Eves can improve the worst-case secrecy rate performance significantly, as compared to two benchmark algorithms without considering the uncertainties of the eavesdroppers' location information.

\vspace{-0.1cm}

\end{document}